\documentclass[conference]{IEEEtran}
\usepackage{xcolor}
\usepackage{xspace}
\usepackage{amsmath}
\usepackage{amsthm}
\usepackage{amssymb}
\usepackage{enumerate}
\usepackage[shortlabels]{enumitem}
\usepackage{thm-restate}
\usepackage{tikz}

\theoremstyle{definition}
\newtheorem{theorem}{Theorem}[section]
\newtheorem{definition}[theorem]{Definition}
\newtheorem{remark}[theorem]{Remark}
\newtheorem{proposition}[theorem]{Proposition}

\newtheorem{lemma}[theorem]{Lemma}
\newtheorem{corollary}[theorem]{Corollary}

\newcommand{\FO}{\textrm{\upshape FO}\xspace}
\newcommand{\FOxc}{\textrm{\upshape FO$_{\exists,\land}$}\xspace}

\newcommand{\GFO}{\textrm{\upshape GFO}\xspace}
\newcommand{\GNFO}{\textrm{\upshape GNFO}\xspace}
\newcommand{\UNFO}{\textrm{\upshape UNFO}\xspace}
\newcommand{\FOtwo}{\textrm{\upshape FO$^2$}\xspace}
\newcommand{\Ctwo}{\textrm{\upshape C$^2$}\xspace}
\newcommand{\GFOtwo}{\textrm{\upshape GFO$^2$}\xspace}

\begin{document}
\title{Craig Interpolation for Guarded Fragments}
\author{\IEEEauthorblockN{Balder ten Cate}
\IEEEauthorblockA{
ILLC, University of Amsterdam
\vspace{-5mm}
}
\and
\IEEEauthorblockN{Jesse Comer}
\IEEEauthorblockA{
ILLC, University of Amsterdam
\vspace{-5mm}
}}

\maketitle

\begin{abstract}
  We show that the guarded-negation fragment (\GNFO) is, in a precise sense,  the smallest extension
  of the guarded fragment (\GFO) with Craig interpolation. In contrast, 
  the smallest extension of the two-variable fragment (\FOtwo)
  with Craig interpolation is full first-order logic.
\end{abstract}

\section{Background}

\noindent\textbf{Decidable Fragments of \FO ~~}
The study of decidable fragments of first-order logic (\FO) is a 
topic with a long history (cf.~\cite{Borger1997:classic}). Inspired by Vardi~\cite{Vardi1996:why}, who asked ``what makes modal logic so robustly decidable?'' and Andreka et al.~\cite{Andreka1998:Modal}, who asked ``what makes modal logic tick?'' many decidable fragments have been introduced and studied over the last 25 years
that take
 inspiration from modal logic, which itself can be viewed as a fragment of \FO that features a restricted form of quantification. These include the following fragments, each of which naturally generalizes modal logic in a different 
way: the \emph{two-variable fragment} (\FOtwo)~\cite{Mortimer1975:languages},
the \emph{guarded fragment} (\GFO)~\cite{Andreka1998:Modal}, and the 
\emph{unary negation fragment} (\UNFO)~\cite{tencate2013:unary}.
Further decidable extensions of these fragments were subsequently identified, including the \emph{two-variable fragment with counting quantifiers} (\Ctwo)~\cite{Graedel97:two} and the 
\emph{guarded negation fragment} (\GNFO)~\cite{Barany2015:guarded}. The latter can be viewed as a
common generalization of \GFO and \UNFO. 
Many decidable logics used in computer science and AI, 
including various description logics and rule-based languages, 
can be translated into \GNFO and/or \Ctwo. In this sense, 
\GNFO and \Ctwo are convenient tools for explaining the decidability of other logics. 
Further extensions of \GNFO have been studied that push the decidability frontier even further (for instance with fixed-point operators and using clique-guards), as well as other, orthogonal decidable fragments. These fall outside the scope of this paper. 
Figure~\ref{fig:fragments} summarizes the fragments 
that are relevant for us here.
\looseness=-1

\medskip\par\noindent\textbf{The Craig Interpolation Property (CIP) ~~}
Ideally, an \FO-fragment is not only algorithmically but also
 model-theoretically well behaved. A particularly important
model-theoretic property of logics is the 
\emph{Craig Interpolation Property} (CIP). It states that, 
for all formulas $\varphi, \psi$, if 
 $\varphi \models \psi$, then there exists a formula $\vartheta$ such that $\varphi \models \vartheta$ and $\vartheta \models \psi$,
 and such that all non-logical symbols occurring in $\vartheta$
 occur both in $\varphi$ and in $\psi$.
 Craig~\cite{Craig1957}
 proved in 1957 that \FO itself has this property (hence the name). Several refinements of Craig's result 
 have subsequently been
 obtained (e.g.,~\cite{Otto2000:interpolation,Benedikt16:generating}). These have
 found numerous applications (e.g.,~\cite{tencate2013:beth,benedikt2016:query,jung2021:separating}). 
  While we have described CIP here as a 
 model theoretic property, it also has a proof-theoretic 
 interpretation. Indeed, it has been argued that CIP is an indicator for the existence of nice proof systems~\cite{hooglandthesis}.

 Turning our attention to the decidable fragments of \FO
 we mentioned earlier, it turns out that, although \GFO is
 in many ways
 model-theoretically well-behaved~\cite{Andreka1998:Modal},
 it lacks CIP~\cite{Hoogland02:interpolation}.
 Likewise, \FOtwo lacks CIP~\cite{Comer1969:classes}, although 
 the intersection of \GFO and \FOtwo (known as \GFOtwo) has CIP \cite{Hoogland02:interpolation}. 
 \Ctwo lacks
CIP as well (\cite[Example 2]{Jung2021:living} yields a counterexample).
 On the other hand, \UNFO and \GNFO have CIP
 \cite{tencate2013:unary,Benedikt2013:rewriting} 
 (as do their fixed-point extensions~\cite{Benedikt2015:interpolation,Benedikt2019:definability}). 
 Indeed, in the
 case of \UNFO and \GNFO, interpolants can be constructed effectively and tight bounds have been established on the 
 size of interpolants and the computational complexity of 
 computing them~\cite{Benedikt2015:effective}.
Figure~\ref{fig:fragments} summarizes these known results. 
Note: we restrict attention to relational signatures without constant symbols and function symbols.
Some of the results depend on this restriction.

\begin{figure}\centering
\vspace{-3mm}
\newcommand{\yes}{\includegraphics[scale=.02]{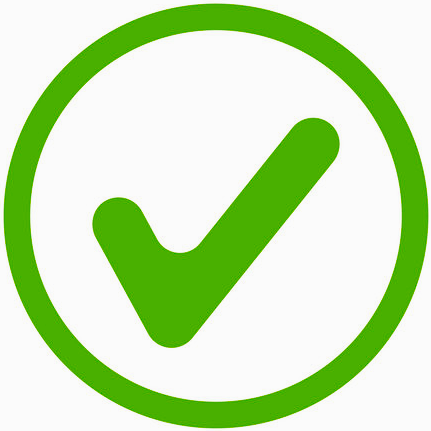}}
\newcommand{\no}{\includegraphics[scale=.02]{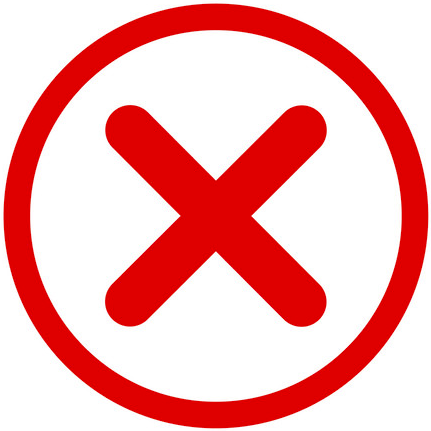}}
\begin{tikzpicture}
  [align=center,node distance=1.35cm, every node/.style={scale=0.8}]
    \node (fo) at (0,0) {\FO \yes};
    \node [below left of=fo] (ctwo) {\Ctwo \no\!\!\!\!\!};
    \node [below right of=fo] (gnfo) {\GNFO \yes\!\!\!\!\!};
    \node [below left  of=gnfo] (gfo)  {\GFO \no\!\!\!\!\!};
    \node [below right of=gnfo] (unfo) {\UNFO \yes\!\!\!\!\!};
    \node [below left of=ctwo] (twovar) {\FOtwo \no\!\!\!\!\!}; 
    \node [below left of= gfo] (twovargf) {\GFOtwo \yes\!\!\!\!\!};
    \node [below right of= twovargf] (ml) {Modal Logic \yes\!\!\!\!\!};

    \draw [thick, shorten <=-2pt, shorten >=-2pt] (fo) -- (gnfo);
    \draw [thick, shorten <=-2pt, shorten >=-2pt] (fo) -- (ctwo);
    \draw [thick, shorten <=-2pt, shorten >=-2pt] (ctwo) -- (twovar);
    \draw [thick, shorten <=-2pt, shorten >=-2pt] (gnfo) -- node[below] {~~(*)} (gfo);
    \draw [thick, shorten <=-2pt, shorten >=-2pt] (gnfo) -- (unfo);
    \draw [thick, shorten <=-2pt, shorten >=-2pt] (gfo) -- (twovargf);
    \draw [thick, shorten <=-2pt, shorten >=-2pt] (twovar) -- (twovargf);
    \draw [thick, shorten <=-2pt, shorten >=-2pt] (twovargf) -- (ml);
    \draw [thick, shorten <=-2pt, shorten >=-2pt] (unfo) -- (ml);
    
    \draw [dashed, thick] (-4,-1.2) -- 
      node[below, at start, sloped] {decidable} (2.5,0.1);
    \draw [dashed, thick] (-4,-2.1) -- node[below, at start, sloped] {finite model property} (3,0);
\end{tikzpicture}
\vspace{-2mm}
\caption{Some decidable fragments of \FO with (\yes) and without (\no) CIP. \\ The inclusion marked $(*)$ holds only for sentences and self-guarded formulas.}
\vspace{-5mm}
\label{fig:fragments}
\end{figure}

\medskip\par\noindent\textbf{What To Do When CIP Fails? ~}
When a fragment $L$ lacks CIP, the
question naturally arises as to whether there exists
a more expressive fragment $L'$ that
has CIP. If such $L'$ exists, then, 
in particular, interpolants for valid $L$-implications
can be found in $L'$. 
This line of analysis is sometimes referred to as 
``\emph{Repairing Interpolation}'' \cite{Areces03:repairing}. 
We will pursue this ``Repairing'' approach in the next sections. Before we do so, let us mention some other approaches
for dealing with fragments that lack CIP. 
One approach is to weaken CIP. 
For example, it was shown in \cite{Hoogland02:interpolation} that \GFO satisfies
a weak, ``modal'' form of Craig interpolation, where, roughly speaking, only
the relation symbols that occur in non-guard position in the interpolant are required to occur both in the premise and the conclusion. As it turns out, this weakening of CIP is strong enough to entail the (non-projective) \emph{Beth Definability Property},
which is one important use case of CIP.
Another recent approach~\cite{Jung2021:living} is to 
develop algorithms for testing whether an
interpolant exists for a given entailment. 
That is, rather than viewing Craig interpolation as a property of logics, 
the existence of interpolants is studied as an 
algorithmic problem at the level of individual entailments.
The interpolant existence problem turns out to be indeed
decidable (although of higher complexity than the satisfiability
problem) for both \GFO and \FOtwo~\cite{Jung2021:living}.
\looseness=-1

\section{Repairing Interpolation for \FOtwo}


The two-variable fragment (\FOtwo) consists of all
FO-formulas containing only two variables,
say, $x$ and $y$, where we allow for nested quantifiers that reuse the same variable (as in 
$\exists xy (R(x,y)\land \exists x(R(y,x)))$,
expressing the existence of a path of length 2). In this context, as is customary, we restrict attention to relations of arity at most $2$.
It is known that \FOtwo is decidable~\cite{Mortimer1975:languages} but does not have CIP~\cite{Comer1969:classes}. 

\begin{restatable}{theorem}{fotwothm}
\label{thm:FOtwo-main}
Let $L$ be any \FO-fragment that extends \FOtwo,
is closed under substitution, and has CIP. Then $\FO \preceq_{sent} L$.
\end{restatable}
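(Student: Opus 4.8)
The plan is to exploit the one consequence of CIP that converts interpolation into raw expressive power, namely the (projective) Beth definability property: since $L$ has CIP, whenever a relation is \emph{implicitly} definable by an $L$-theory it is already \emph{explicitly} definable by a single $L$-formula over the shared signature. The strategy is therefore threefold: (i) encode pieces of an arbitrary \FO-formula by implicit definitions that can be written entirely inside $\FOtwo \subseteq L$; (ii) invoke Beth to extract explicit $L$-definitions of these pieces; and (iii) use closure under substitution to glue the building blocks together into a translation of an arbitrary \FO-sentence. Since $L$ is by assumption an \FO-fragment, the reverse inclusion is free, so only this lower bound requires work.

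First, I would isolate the single feature that \FOtwo lacks: the ability to quantify a variable ``shared'' by two atoms with otherwise disjoint variables, i.e.\ relation composition $A(x,z) \equiv \exists y\,(R(x,y)\land S(y,z))$, which provably needs a third variable. I would show that composition is nonetheless \emph{implicitly} $\FOtwo$-definable over $\{R,S\}$: introduce auxiliary predicates that ``reify'' a witnessing middle element --- a marker splitting each putative $A$-edge into an incoming $R$-half and an outgoing $S$-half --- and write down $\FOtwo$ sentences $\Sigma(R,S,A,\ldots)$ which (a) are satisfiable whenever $A$ is interpreted as $R\circ S$, and (b) force $A=R\circ S$ in every model, so that $A$ is uniquely determined by $R,S$. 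Because $\Sigma\subseteq\FOtwo\subseteq L$, Beth then yields an explicit $\alpha(x,z)\in L$ equivalent to $R\circ S$; note $\alpha\notin\FOtwo$, so already $L\supsetneq\FOtwo$.

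The bootstrapping step generalizes this from a single composition to an arbitrary \FO-formula. Fix an \FO-sentence $\Phi$ over a signature $\tau$. I would build a uniform ``tuple-reification'' implicit definition: introduce assignment-elements together with binary projection relations $\pi_i$ selecting the value of the $i$-th variable, so that each $k$-ary subformula of $\Phi$ becomes a \emph{unary} predicate over assignment-elements and each quantifier step is realized by navigating a single $\pi_i$-edge --- an operation that, by design, touches only two variables at a time. Writing all the Tseitin-style biconditionals relating a subformula predicate to those of its immediate subformulas, together with encoding axioms forcing the $\pi_i$ to behave like genuine projections, should give an $\FOtwo$-theory that implicitly defines the top-level predicate $P_\Phi$ over $\tau$. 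Beth extracts an explicit $L$-definition, and closure under substitution lets me back-substitute the $L$-definitions of the subformula predicates, yielding a genuine $L$-sentence over $\tau$ equivalent to $\Phi$. Ranging over all $\Phi$ gives $\FO \preceq_{sent} L$.

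The main obstacle, and where essentially all the care is needed, is the design of these $\FOtwo$ implicit definitions. Two tensions must be resolved simultaneously: the translation must keep \emph{at most two} live variables at every step (which is exactly what the reification of tuples and of composition witnesses is for), while the theory must still \emph{pin down} the target relation uniquely from the base signature --- otherwise the relation is not implicitly defined and Beth does not apply. Guaranteeing uniqueness typically forces one to enforce functionality and totality of the projection relations and rigidity of the assignment-elements, and it is delicate to do this in two variables without leaving the auxiliary predicates enough freedom to alter the value of $P_\Phi$. Verifying that the encoding genuinely recovers \emph{full} \FO --- rather than a proper subfragment such as the relation-algebra reduct obtainable from composition alone --- is the part I expect to require the most work.
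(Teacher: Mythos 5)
There is a genuine gap, and it sits exactly where you locate ``essentially all the care'': the \FOtwo implicit definitions you need do not exist in the form you describe. The Beth definability property (projective or not) concerns two \emph{expansions of the same structure over the same domain}: $\Sigma$ implicitly defines $A$ only if any two models of $\Sigma$ agreeing on the domain and on the base signature agree on $A$, and to get a definition of the \emph{intended} relation valid in all structures you additionally need every base-signature structure to be expandable to a model of $\Sigma$. Your tuple-reification step violates this: ``assignment-elements'' coding $k$-tuples cannot be introduced by an expansion, and a structure with $n$ elements has no room for the $n^k$ assignments once you impose the functionality and totality of the projections $\pi_i$ that you yourself note are required for uniqueness. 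So $\Sigma$ would be unsatisfiable over most $\tau$-structures and Beth gives you nothing about $\Phi$ over ordinary $\tau$-structures. The same two-variable bottleneck already defeats the warm-up: a marker predicate on a ``middle element'' $y$ cannot be tied to both endpoints $x$ and $z$ of a putative $A$-edge using only two variables, and you give no axioms $\Sigma(R,S,A,\ldots)$ that provably force $A = R\circ S$ in \emph{every} model while remaining expandable from \emph{every} $\{R,S\}$-structure. Establishing that such a $\Sigma$ exists is not a detail to be checked later; it is the whole theorem in disguise, and I see no reason to believe it holds.

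The paper's proof avoids implicit definability entirely and uses CIP directly, and it is worth seeing why that works where Beth does not. One proves by induction on an \FO-formula $\phi(x_1,\ldots,x_n)$ that $L$ has a \emph{sentence} equivalent to ``$\phi$ holds of the unique elements satisfying fresh unary predicates $P_1,\ldots,P_n$'' --- so free variables are represented by singleton-valued unary predicates, not by reified tuples, and no new domain elements are ever needed. The only hard case is $\exists x_{n+1}\phi'$: letting $\psi$ be the sentence for $\phi'$ and $\psi'$ its copy with $P_{n+1}$ renamed to $P'$, one checks that $\psi\wedge P_{n+1}(x)\models\bigl(P'(x)\wedge\forall y(P'(y)\to y=x)\bigr)\to\psi'$. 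An interpolant $\vartheta(x)$ contains neither $P_{n+1}$ nor $P'$, hence is sandwiched between $\exists P_{n+1}(\psi\wedge P_{n+1}(x))$ and $\forall P'(\cdots)$, which are equivalent; so $\vartheta(x)$ \emph{is} the projection of $\psi$ along $P_{n+1}$, and $\exists x\,\vartheta(x)$ (obtainable by closure under substitution) completes the step. The lesson is that CIP is used here as a projection device --- the interpolant automatically eliminates a relation symbol occurring on only one side --- which is strictly more flexible than routing everything through an implicitly defining theory.
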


Here, we write $L_1 \preceq_{sent} L_2$ to indicate that every $L_1$-sentence is expressible in $L_2$; by \emph{closure under substitution} we mean that,
for every formula $\varphi\in L$ containing an $n$-ary 
relation symbol $R$ and for every formula $\psi(x_1, \ldots, x_n)\in L$, $\varphi[\psi/R]$ is expressible 
in $L$, where $\varphi[\psi/R]$ is obtained from $\varphi$
by replacing every subformula of the form $R(y_1, \ldots y_n)$ by $\psi(y_1, \ldots, y_n)$ (assuming this is a safe substitution).
Intuitively, a fragment is closed under substitution  if it has a compositional syntax, and the assumption of closure under
substitution in Theorem~\ref{thm:FOtwo-main} serves the purpose of ensuring that
$L$ not only subsumes $\FOtwo$ but is also closed
under the connectives of $\FOtwo$ (indeed, this is all we use in the proof).

Theorem~\ref{thm:FOtwo-main} shows that, to repair interpolation for \FOtwo, we must go to full \FO. In particular, every extension of \FOtwo 
closed under substitution with CIP is (assuming an effective syntax)  undecidable.
The proof is given in Appendix~\ref{app:FOtwo}. 


 

\section{Repairing Interpolation for \GFO}



The guarded fragment (\GFO) allows formulas in
which all quantifiers are ``guarded''. Formally, 
a \emph{guard} for a formula $\varphi$ is an atomic
formula $\alpha$ whose free variables include all
free variables of $\varphi$. 
Following~\cite{Graedel99:restraining}, we allow  
$\alpha$ to be an equality. 
More generally, by an
\emph{$\exists$-guard} for $\varphi$, we will mean
a possibly-existentially-quantified atomic formula $\exists\overline{x}\beta$
whose free variables include all free variables of 
$\varphi$.
The formulas of \GFO are generated by the following grammar:
$$\varphi := \top \mid R(\overline{x}) \mid x = y \mid \lnot \varphi \mid \varphi \land \psi \mid \exists \overline{x} (\alpha \land \varphi),$$
where, in the last clause, $\alpha$ is a guard for $\varphi$. Note again that 
we do not allow
constants and function symbols.

In the guarded-negation fragment (\GNFO), arbitrary existential quantification is allowed, but every negation is required to be guarded. 
More precisely, the formulas of \GNFO are generated by the following grammar:
$$\varphi := \top \mid R(\overline{x}) \mid x = y \mid \varphi \lor \varphi \mid \varphi \land \varphi  \mid \exists x \varphi \mid \alpha \land \lnot \varphi,$$
where, in the last clause, $\alpha$ is a guard for $\varphi$.

As is customary, the above definitions are phrased
in terms of ordinary guards $\alpha$. However, it is easy to
see that if we allow for $\exists$-guards, this would not 
affect the expressive power (or the computational complexity)
of these logics in any way. This is because $\exists\overline{x}\beta\land \varphi$ can be equivalently written as $\exists\overline{x}(\beta\land\varphi)$. In other words, 
an $\exists$-guard is as good as an ordinary guard.

We call a \FO-formula \emph{self-guarded} if it
is either a sentence or it is of the form $\alpha\land\varphi$ where $\alpha$ is an $\exists$-guard for
$\varphi$. It was shown in~\cite{Barany2015:guarded} that
every self-guarded \GFO-formula is expressible in \GNFO.
In particular, this applies to all \GFO-sentences and
\GFO-formulas with at most one free variable 
(since any such formula can be equivalently written 
as $x=x\land\varphi$). 
It is therefore common to treat 
\GNFO as an extension of \GFO.
This is reflected
by the line marked (*) in Figure~\ref{fig:fragments}.
Formally, we write $L_1 \preceq_{sg} L_2$ to indicate
that every self-guarded $L_1$-formula is expressible in $L_2$;
hence $\GFO \preceq_{sg} \GNFO$.
\looseness=-1

Guarded fragments are peculiar, in that they are not closed
under substitution. For example $\exists xy(R(x,y)\land \neg S(x,y))$ belongs to \GFO but if we substitute $x=x\land y=y$ for $R(x,y)$, 
we obtain $\exists xy(x=x\land y=y\land\neg S(x,y))$, which
does not belong to \GFO (and is not even expressible in \GNFO).
\GFO and \GNFO are, however, closed under \emph{self-guarded substitution}: we can uniformly replace relations by
self-guarded formulas. 
\looseness=-1

Given these subtleties, we can now state our main result:
\begin{restatable}{theorem}{thmmain}
\label{thm:main}
Let $L$ be any FO-fragment such that 
\begin{enumerate}
    \item $\GFO\preceq_{sg} L$, 
    \item $L$ is closed under self-guarded substitution, 
    \item $L$ is closed under conjunction and disjunction, and
    \item $L$ has CIP.
\end{enumerate}
Then $\GNFO\preceq L$.
\end{restatable}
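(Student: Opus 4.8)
The plan is to show that every \GNFO-formula is expressible in any fragment $L$ satisfying the four hypotheses, by leveraging CIP to ``synthesize'' guarded negation out of ingredients that $L$ already possesses. The key observation is that the only feature of \GNFO not already present in \GFO is the liberalized use of existential quantification (arbitrary $\exists x$ rather than guarded $\exists\overline{x}(\alpha\land\cdot)$); guarded negation itself is shared between the two fragments. So the heart of the argument is to express, inside $L$, formulas of the form $\exists x\,\varphi$ where $\varphi$ is an arbitrary (already-expressible) $L$-formula, \emph{without} requiring a guard on the quantifier.

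First I would set up an induction on the structure of \GNFO-formulas, using the grammar $\varphi := \top \mid R(\overline{x}) \mid x=y \mid \varphi\lor\varphi \mid \varphi\land\varphi \mid \exists x\varphi \mid \alpha\land\lnot\varphi$. The atomic cases and the Boolean cases $\lor,\land$ are handled directly by hypotheses (3) and the fact that atoms are self-guarded hence in $L$ by (1). The guarded-negation case $\alpha\land\lnot\varphi$ should follow because $\alpha\land\lnot\varphi$ is itself self-guarded, and once $\varphi$ is expressible in $L$, one can hope to realize $\alpha\land\lnot\varphi$ using self-guarded substitution (hypothesis (2)) into a fixed \GFO-template such as $\alpha\land\lnot R(\overline{x})$, which is a self-guarded \GFO-formula and hence in $L$ by (1). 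The crux, then, is the unguarded existential quantifier $\exists x\,\varphi$.

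The hard part will be eliminating the guard from existential quantification using CIP. The natural strategy is to exhibit a valid \FO-entailment whose only interpolant must express unguarded projection, and to arrange the premise and conclusion so that they lie in $L$ (or in the relevant self-guarded \GFO-fragment closed under substitution). Concretely, to express $\exists x\,\varphi(x,\overline{y})$ for an $L$-formula $\varphi$, I would introduce a fresh relation symbol and write down an implication that forces any interpolant, on the common signature, to be equivalent to the unguarded existential projection of $\varphi$; applying CIP yields an $L$-formula for $\exists x\,\varphi$ whose signature is the shared one (in particular not mentioning the fresh symbols). Making this work requires a careful choice of premise and conclusion: each side should be built from guarded pieces (so as to live in $L$ via hypotheses (1)--(3) and self-guarded substitution), while their combination semantically encodes unguarded existential quantification; a standard device is to use a fresh relation to ``import'' an unrestricted element and then relate it back. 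The main obstacle, and the step deserving the most care, is verifying both that the chosen entailment is genuinely valid and that its two sides truly belong to $L$ under the self-guarded-substitution regime — since \GFO and \GNFO are \emph{not} closed under arbitrary substitution, one must stay within self-guarded substitutions throughout, and confirm that the interpolant's restricted signature forces exactly the intended unguarded meaning rather than some weaker guarded approximation.

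Once unguarded $\exists$ is available in $L$ alongside $\land,\lor$ and guarded negation, the induction closes and every \GNFO-formula is expressible in $L$, giving $\GNFO\preceq L$ as desired.
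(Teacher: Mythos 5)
Your high-level strategy---isolate the unguarded existential quantifier as the only genuinely new ingredient of \GNFO over \GFO, and use CIP with a fresh guard relation on the premise side to eliminate that guard---is exactly the skeleton of the paper's argument, and your treatment of the atomic, Boolean, and guarded-negation cases matches the paper's (with one small repair: to obtain $\alpha\land\lnot\varphi$ you must substitute the \emph{self-guarded} formula $\alpha\land\varphi$, not $\varphi$ itself, into the template $\alpha\land\lnot R(\overline{x})$, since $\varphi$ need not be self-guarded). However, the step you defer as ``a careful choice of premise and conclusion'' is precisely the paper's main technical contribution, and without it the proof does not close. The premise is indeed $\gamma(\overline{y}):=\exists x(G(x,\overline{y})\land\psi)$ for a fresh guard $G$; the problem is the conclusion $\chi$. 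It must (i) be entailed by $\gamma$, (ii) not mention $G$, (iii) be expressible in $L$ using only \emph{guarded} quantification, and (iv) be weak enough that universally quantifying out its own fresh symbols still yields exactly $\exists x\,\psi$, so that the interpolant is squeezed to the intended meaning. Taking $\chi:=\exists x\,\psi$ is circular (that is the formula we are trying to express), and guardedly quantifying $\exists x$ over a body with free variables $x,\overline{y}$ would require a guard atom covering all of $x,\overline{y}$, which is exactly what is unavailable.

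The paper's resolution is the $\textsf{BIND}_{\overline{y}\mapsto\overline{P}}$ construction: relativize each free variable $y_i$ of $\psi$ to a fresh unary predicate $P_i$ (rewriting each atom $\alpha$ as $\exists\overline{y}'(\alpha\land\bigwedge P_i(y_i))$), so that $\textsf{BIND}_{\overline{y}\mapsto\overline{P}}(\psi)$ has only $x$ free and $\exists x(x{=}x\land \textsf{BIND}_{\overline{y}\mapsto\overline{P}}(\psi))$ becomes a \emph{guarded} quantification; the conclusion is then $(\bigwedge_i P_i(y_i))\to\exists x(x{=}x\land\textsf{BIND}_{\overline{y}\mapsto\overline{P}}(\psi))$, expressible via closure under unary implications (itself derived from guarded negation plus disjunction). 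Crucially, the soundness of this relativization (Propositions~\ref{impLemma} and~\ref{eqLemma}) depends on $\psi$ being a \emph{positive existential} formula: under a negation, replacing an atom by its relativized version reverses the entailment. This is why the paper does not run the induction directly on \GNFO-formulas as you propose, but first proves $\FOxc\preceq L$ (Proposition~\ref{prop:CQs}) and then appeals to the UCQ normal form of \GNFO together with self-guarded substitution. Your plan to handle $\exists x\varphi$ for an arbitrary \GNFO-formula $\varphi$ would therefore need either a relativization device that commutes with guarded negation (which is not obvious) or the same detour through the positive-existential fragment. Note also that even expressing $\textsf{BIND}_{\overline{y}\mapsto\overline{P}}(\psi)$ in $L$ requires its own induction with a further application of CIP in the existential step (Lemma~\ref{BINDexpLemma}), so CIP is invoked recursively, not just once per quantifier.
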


By $\GNFO\preceq L$, we mean that every $\GNFO$-formula
is equivalent to an $L$ formula (not only sentences, and
not only self-guarded formulas).

In other words, loosely speaking, \GNFO is the smallest extension of \GFO with CIP. 
The proof of Theorem~\ref{thm:main} is given in Appendix~\ref{app:GNF}. It is based on similar ideas as the proof of Theorem~\ref{thm:FOtwo-main}, but the argument is more intricate. 
The main technical result is the following
proposition:

\begin{restatable}{proposition}{propCQs}
\label{prop:CQs}
Let $L$ be any \FO-fragment with CIP that includes all atomic formulas and is closed under guarded quantification, conjunction, and unary implication. Then $\FOxc \preceq L$.
\end{restatable}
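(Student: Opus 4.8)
The plan is to prove the Proposition by showing that every conjunctive query is equivalent to an $L$-formula, proceeding by induction on its syntactic structure. The only feature of $\FOxc$ that is not immediately available in $L$ is \emph{unguarded} existential quantification: atoms lie in $L$ by assumption, conjunctions are handled by closure under conjunction, so the whole argument reduces to a single closure property, namely that $\exists \overline{y}\,\theta$ is expressible in $L$ whenever $\theta \in L$, even when the free variables of $\theta$ admit no common guard. I will in fact treat a full conjunctive query $q(\overline{z}) = \exists \overline{y}\,\bigwedge_i \alpha_i(\ldots)$ at once, and exhibit $q$ as a \emph{Craig interpolant} of a carefully chosen valid entailment between two $L$-formulas whose only shared non-logical symbols are those occurring in $q$.

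The engine is the following observation about second-order projections. Introduce a fresh relation symbol $G$ whose arity covers all of $\overline{y}$ and $\overline{z}$, and set $\varphi_G := \exists \overline{y}\,\bigl(G(\overline{y},\overline{z}) \land \bigwedge_i \alpha_i\bigr)$. The atom $G(\overline{y},\overline{z})$ contains every free variable of the body, so this is a single \emph{guarded} existential quantification over a conjunction of atoms; hence $\varphi_G \in L$. Moreover $\varphi_G \models q$ for every interpretation of $G$, while interpreting $G$ as the total relation shows that the second-order projection $\exists G\,\varphi_G$ is equivalent to $q$. Dually, I will construct an $L$-formula $\psi_H$ over the signature $\sigma \cup \{H\}$, for a second fresh relation $H$ disjoint from $G$, whose \emph{universal} projection $\forall H\,\psi_H$ is again equivalent to $q$. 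Since $G$ occurs only in $\varphi_G$ and $H$ only in $\psi_H$, the entailment $\varphi_G \models \psi_H$ is valid precisely because $\exists G\,\varphi_G \equiv q \equiv \forall H\,\psi_H$; its common signature is exactly that of $q$, so CIP supplies an interpolant $\vartheta \in L$ with $q \models \vartheta \models q$, whence $\vartheta \equiv q$, as desired.

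The main obstacle is the construction of the ``upper'' formula $\psi_H$: whereas the ``lower'' formula $\varphi_G$ is built with nothing more than conjunction and one guarded existential quantifier, capturing $q$ \emph{from above} as a universal projection would, over ordinary $\FO$, call for universal quantification or negation, neither of which $L$ provides. This is exactly the intended role of closure under \emph{unary implication}: it allows the complement direction of the query to be encoded, relative to the fresh guard $H$, so that quantifying $H$ away universally recovers $q$ rather than $\bot$. I expect the delicate points to be (i) writing $\psi_H$ so that it genuinely lies within the permitted connectives (atoms, conjunction, guarded quantification, and unary implication only), and (ii) verifying the two projection identities $\exists G\,\varphi_G \equiv q$ and $\forall H\,\psi_H \equiv q$, which together guarantee both that the entailment $\varphi_G \models \psi_H$ holds and that the interpolant is squeezed to be equivalent to $q$. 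Once this single-step closure is established, an induction on the structure of conjunctive queries yields $\FOxc \preceq L$.
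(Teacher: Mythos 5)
Your overall architecture is the right one, and it matches the paper's: sandwich the query $q$ between an $L$-formula $\varphi_G$ whose existential second-order projection $\exists G\,\varphi_G$ is $q$ (obtained by inserting a fresh guard atom, exactly as in the paper) and an $L$-formula whose universal second-order projection is also $q$, then invoke CIP and observe that the interpolant is squeezed to be equivalent to $q$. The lower half of the sandwich is correct as you describe it. But the entire content of the proposition lives in the upper half, and you have left it as an acknowledged gap: you posit a formula $\psi_H$ over a single fresh relation $H$ that ``encodes the complement direction of the query'' via unary implication, without constructing it, and you yourself flag points (i) and (ii) as the delicate ones. As sketched, this does not go through: $L$ has no negation and no universal quantification, so there is no evident way to write, within atoms, conjunction, guarded quantification, and unary implication, a formula in one extra relation symbol $H$ whose universal projection is $q$.

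The paper's resolution uses a different and more delicate device. Instead of one high-arity relation $H$, it introduces one fresh \emph{unary} predicate $P_i$ per free variable $y_i$ of the formula, uses the antecedent $\bigwedge_i P_i(y_i)$ (available by iterated unary implication) to force each $P_i$ to pin down the value of $y_i$, and rewrites the body by a transformation $\textsf{BIND}_{\overline{y}\mapsto\overline{P}}$ that re-quantifies the $y_i$ \emph{locally inside each atom}, guarded by that atom and tied back to the $P_i$. Two further points are essential and absent from your plan. First, the expressibility of $\textsf{BIND}_{\overline{y}\mapsto\overline{P}}(\psi)$ in $L$ is itself nontrivial and requires a separate nested induction with its own application of CIP at every existential step (the paper's Lemma on $\textsf{BIND}$ expressibility). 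Second, your ``all at once'' treatment of $q(\overline{z})=\exists\overline{y}\bigwedge_i\alpha_i$ breaks down for the upper formula: after binding the parameters, you would still need to existentially quantify the whole tuple $\overline{y}$ over a formula with several free variables and no common guard, which is precisely the closure property being proved. The paper therefore peels off one quantifier at a time, so that after $\textsf{BIND}$ only the single variable $x$ remains free and $x=x$ serves as a legitimate guard. Without the $\textsf{BIND}$ construction and the one-variable-at-a-time induction, the proof does not close.
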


Here \FOxc denotes the existential-conjunctive
fragment of \FO (cf.~the Appendix); we say that a fragment $L$ is \textit{closed under guarded quantification} if, whenever  $\varphi\in L$ and $\alpha$ is a guard for $\varphi$, $L$ can express $\exists \overline{x} (\alpha \land \varphi)$; and
$L$ \emph{is closed under unary implications} if,
whenever $\varphi\in L$ and $\alpha$ is an atomic formula with only one free variable, $L$ can express $\alpha \to \varphi$.

We note that, for ease of exposition, Theorem~\ref{thm:FOtwo-main} and Theorem~\ref{thm:main} are stated 
in terms of fragments of \FO. However, the assumption that $L$ is a fragment of \FO is not used in any essential way in the proof. It is also possible to state these results using an abstract notion of logics, as in~\cite{tencate2005:interpolation,vanbenthem2009:lindstrom}.
It was shown in~\cite{tencate2005:interpolation} that every abstract logic extending \GFO with CIP is undecidable. However, \cite{tencate2005:interpolation} assumes constant symbols and concerns a stronger version of CIP, interpolating not only over relation symbols but also
over constant symbols.

\medskip\par\noindent\textbf{Acknowledgements:} We thank Jean  Jung, Frank Wolter, and Malvin Gattinger for  feedback on a  draft.
Balder ten Cate is supported by EU Horizon
2020 grant MSCA-101031081.

\bibliographystyle{plain}
\bibliography{main}

\bigskip\bigskip

\newcommand{\free}{\textrm{free}}
\newcommand{\bind}{\textsf{BIND}}

\appendices

\section{Proof of Theorem~\ref{thm:FOtwo-main}}
\label{app:FOtwo}

Several of the following proofs make use of second-order logic, with 
quantifiers over predicates. These second-order quantifiers will
be taken to range over the full powerset of the 
domain of the structure.

\fotwothm* 

\begin{proof}
The following proof uses a similar strategy as was used in~\cite{tencate2005:interpolation} to show that
every abstract modal language extending the modal language 
with the difference operator has full first-order
expressive power.

We will show by formula induction that, for every $\FO$-formula $\phi(x_1 \ldots, x_n)$ there is a sentence $\psi\in L$ over an extended signature containing additional
unary predicates $P_1, \ldots, P_n$, that is equivalent to 
$$\exists x_1\ldots x_n(\big(\!\!\!\bigwedge_{i=1\ldots n}\!\!\! P_i(x_i)\land\forall y(P_i(y)\to y=x_i)\big)\land\phi(x_1, \ldots, x_n)).$$ 

In other words, $\psi$ is a sentence expressing that $\phi$ holds under an assignment of its free variables to some tuple of elements which uniquely satisfy the $P_i$ predicates. In the case that $n=0$ (i.e., the case that $\phi$ is a sentence), we then have that $\psi$ is equivalent to $\phi$, which show that $FO\preceq_{sent} L$.

The base case of the induction is straightforward (recall that we restrict attention to relations of arity at most 2). The induction step for the Boolean connectives is straightforward as well (using closure under substitution). In fact, 
the only non-trivial part of the argument is the induction step for the existential quantifier. 
Let $\phi(x_1,\ldots,x_n)$ be of the form $\exists x_{n+1} \phi'(x_1.\ldots,x_n,x_{n+1})$. 
By induction, there is an $L$-sentence $\psi$
over the signature with additional unary predicates $P_1, \ldots, P_{n+1}$, corresponding to $\phi'(x_1, \ldots, x_n, x_{n+1})$.
Now, let $\psi'$ be obtained from $\psi$ by replacing every occurrence of $P_{n+1}$ by $P'$ for some fresh unary predicate $P'$. Furthermore, let
\[ \begin{array}{ll} 
\gamma(x) &:= \psi \land P_{n+1}(x),\\[2mm]
\chi(x)   &:= (P'(x)\land\forall y(P'(y)\to y=x)) \to \psi'.
\end{array} \]
(where $x$ is either of the two variables we have at our disposal; it does not matter which). It follows from closure under substitution that both can be written as an $L$-formula.
Then $$\gamma(x) \models \chi(x).$$
Let $\theta(x)\in L$ be an interpolant. 
By closure under substitution, $\exists x \theta(x)$ is expressible in $L$ as well.
We claim that this sentence satisfies the requirement of our 
claim.

To see this, first observe that since $P_{n+1}$  occurs only in the antecedent and $P'$ only in the consequent, the following second-order entailment is also valid:
$$\exists P_{n+1}\gamma(x) \models \vartheta(x) \models \forall P' \chi(x).$$

It is not hard to see that $\exists P_{n+1} \gamma(x)$ and $\forall P' \chi(x)$ are equivalent. Indeed, both are
satisfied in a structure $M$ under an assignment $g$ precisely
if $M',g\models\phi$, where $M'$ is the expansion of $M$ in 
which $P_{n+1}$ denotes the singleton set $\{g(x_{n+1})\}$.

It then follows that $\vartheta(x)$, being sandwiched between the two, is also equivalent to $\exists P_{n+1} \gamma(x)$. Therefore, $\exists x \vartheta(x)$ is 
equivalent to $\exists x \exists P_{n+1} \gamma(x)$, which 
is equivalent to $\exists P_{n+1} \psi$,
which clearly satisfies the requirement of our claim.
\end{proof}

This further implies undecidability, under a mild extra condition: 
we say that a fragment $L$ of $\FO$ is ``effectively closed under conjunction'', if there is a computable function
that takes any two formulas $\phi,\psi\in L$ and 
outputs a formula $\chi\in L$ such that $\chi$ is 
equivalent to $\phi\land\psi$.

\begin{corollary}
Let $L$ be any \FO-fragment that extends \FOtwo,
is closed under substitution, and has CIP. 
Furthermore, assume that $L$ is effectively 
closed under conjunction. Then the satisfiability
problem for $L$ is undecidable. 
\end{corollary}

\begin{proof}
It is known that satisfiability is undecidable
for $FO^2$-formulas with two transitive relations~\cite{Kieronski2005Results}.
This problem reduces to the satisfiability
problem for $L$ as follows: let $\phi$
be any $FO^2$-formula containing (among possibly
other relation symbols) binary relation symbols
$R_1$ and $R_2$. Then $\phi$ is satisfiable over
structures in which $R_1$ and $R_2$ are 
transitive, if and only if 
$\phi\land\psi$ is satisfiable, where $\psi$
is a (fixed) $L$-sentence expressing that $R_1$ 
and $R_2$ are transitive. Note that it follows from Theorem~\ref{thm:FOtwo-main} that such a 
sentence $\psi$ exists. Since $L$ is effectively
closed under conjunction, this is an effective 
reduction.
\end{proof}

\section{Proof of Theorem~\ref{prop:CQs}}
\label{app:GNF}

We will assume familiarity with conjunctive queries (CQs)
and unions of conjunctive queries (UCQs).
An important alternative characterization for \GNFO is that it is the logic which can express every union of conjunctive queries (UCQ) and is closed under guarded negation~\cite{Barany2015:guarded}. This is made explicit in the following equivalent grammar for \GNFO:
$$\varphi := R(\overline{x}) \mid x = y \mid \alpha \land \lnot \varphi \mid q[\varphi_1 / R_1, \hdots, \varphi_n / R_n],$$
where $q$ is a UCQ with relation symbols $R_1, \hdots, R_n$ and $\varphi_1, \hdots, \varphi_n$ are self-guarded formulas with the appropriate number of free variables and generated by the same recursive grammar. We refer to this as the UCQ syntax for \GNFO. 

The main thrust of the argument will be to show that our abstract logic $L$ can express all positive existential formulas, from which it will follow easily that $L$ is able to express all formulas in the UCQ syntax for \GNFO.

\begin{definition}
We write \FOxc for the fragment of first-order logic with only existential quantification and conjunction:
$$\varphi := R(x_1, \hdots, x_k) \mid x = y \mid \varphi \land \varphi \mid \exists x \varphi.$$
\end{definition}

\begin{definition}
Let $\varphi$ be a formula in \FOxc, let $\overline{y} = y_1, \hdots, y_n$ be a tuple of distinct variables,
and let $\overline{P}=P_1, \hdots, P_n$ be a tuple of unary predicates  of the same length. Then $\bind_{\overline{y}\mapsto \overline{P}}(\varphi)$ is defined recursively as follows:
\[\begin{array}{ll}
\bind_{\overline{y}\mapsto \overline{P}}(\alpha) &= 
\exists \overline{y}' (\alpha \land \bigwedge_{1\leq i\leq n, y_i\in \free(\alpha)}P_i(u_i)) \\
\bind_{\overline{y}\mapsto \overline{P}}(\phi \land \psi) &= \bind_{\overline{y}\mapsto \overline{P}}(\phi) \land \bind_{\overline{y}\mapsto \overline{P}}(\psi) \\
\bind_{\overline{y}\mapsto \overline{P}}(\exists z \psi) &= \exists z (\bind_{\overline{y}\mapsto \overline{P}}(\psi)),
\end{array}\]
where $\alpha$ is an atomic fact (possibly an equality),
and $\overline{y}'$ is the  restriction
of $\overline{y}$ to variables occurring in $\alpha$.
If no variable in $\overline{y}$ occurs in $\alpha$, 
$\bind_{\overline{y}\mapsto \overline{P}}(\alpha)$ is
understood to be
simply $\alpha$.
\end{definition}

\begin{remark}
The free variables of $\bind_{\overline{y} \mapsto \overline{P}}(\varphi)$, for $\overline{y}=y_1,\hdots,y_n$, are exactly $\free(\varphi)\setminus\{y_1, \hdots, y_n\}$. This justifies our use of the word ``BIND''.
\end{remark}


\begin{proposition}
\label{BINDcomp}
For all $\FOxc$-formulas $\varphi$ and for all 
$\overline{x}, \overline{y}$ and $\overline{P},\overline{Q}$, if $\overline{x}$ and $\overline{y}$
are disjoint, then
$$\bind_{\overline{x}\overline{y} \mapsto \overline{P}\overline{Q}}(\varphi) \equiv \bind_{\overline{x} \mapsto \overline{P}}(\bind_{\overline{y} \mapsto \overline{Q}}(\varphi)).$$
\end{proposition}

\begin{definition}
We call a formula $\varphi$ \textit{clean} if no free variable of $\varphi$ also occurs bound in $\varphi$, and $\varphi$ does not contain two quantifiers for the same
variable.
\end{definition}

\begin{proposition}
\label{impLemma}
For every clean \FOxc-formula $\varphi$, for every tuple of distinct variables $\overline{y}=y_1, \hdots, y_n$ (with each $y_i\in \free(\varphi)$), and for every tuple of unary predicates
$\overline{P}=P_1,\ldots,P_n$, we have that
$$\big(\bigwedge_{i=1\ldots n} P_i(y_i)\big)\models \varphi  \to \bind_{\overline{y} \mapsto \overline{P}}(\varphi).$$
\end{proposition}

\begin{proof} The proof is by induction on $\varphi$. More precisely,
the induction hypothesis states that, for every
model $M$ and variable assignment $g$, if 
$M,g\models \bigwedge_{i=1\ldots n} P_i(y_i)$
and 
 $M,g\models\varphi$ then $M,g\models \bind_{\overline{y} \mapsto \overline{P}}(\varphi)$.
\end{proof}

\begin{proposition}
\label{eqLemma}
For every clean \FOxc-formula $\varphi(x,\overline{y})$ with $\overline{y}=y_1, \ldots, y_n$ distinct from $x$, and for every $n$-tuple of unary predicates
$\overline{P}=P_1, \ldots, P_n$ not occurring in $\varphi$, we have that
$$\exists x\varphi(x,\overline{y}) \equiv \forall \overline{P}\Big(\big(\bigwedge_{i=1\ldots n} P_i(y_i)\big) \to \exists x \bind_{\overline{y} \mapsto \overline{P}}(\varphi(x,\overline{y}))\Big).$$
\end{proposition}

\begin{proof} \hspace*{1pt} \\
The left-to-right entailment follows from Proposition \ref{impLemma}: suppose $M,g\models \exists x\varphi(x,\overline{y})\land\bigwedge_{i=1\ldots n} P_i(y_i)$. Then $M,g[x/b]\models \varphi(x,\overline{y})\land \bigwedge_{i=1\ldots n} P_i(y_i)$ for some $b\in M$. 
Then, by Proposition~\ref{impLemma}, 
$M,g[x/b]\models \bind_{\overline{y}\mapsto\overline{P}}(\varphi(x,\overline{y}))$, and hence
$M,g\models \exists x\bind_{\overline{y}\mapsto\overline{P}}(\varphi(x,\overline{y}))$.

For the reverse direction, suppose $M,g \models \forall \overline{P}(\bigwedge_i P_i(y_i) \to \exists x \bind_{\overline{y} \mapsto \overline{P}}(\varphi(x,\overline{y})))$. Let $M'$ be the expansion
of the structure $M$ in which each unary predicate symbol $P_i$ is interpreted as $\{g(y_i)\}$. Then, 
by the semantics of second-order quantifiers, 
we have that $M',g\models \exists x \bind_{\overline{y} \mapsto \overline{P}}(\varphi(x,\overline{y}))$, and hence $M',g[x/b] \models \bind_{\overline{y} \mapsto \overline{P}}(\varphi(x,\overline{y}))$ for some $b \in M$. 
To complete the proof, it suffices to show that
$M',g[x/b]\models \varphi(x,\overline{y})$ (since 
this implies that also $M,g[x/b]\models\varphi(x,\overline{y})$).

For any subformula containing a bound occurrence of a variable $y_i \in \overline{y}$, we have that any witness for that variable $y_i$ must also be in $P_i$ (by construction of $\bind_{\overline{y} \mapsto \overline{P}}(\varphi(x,\overline{y}))$ and the assumption that $\varphi(x,\overline{y})$ is clean). Since each $P_i$ is a singleton, this implies that each witness for $y_i$ in any subformula is $g(y_i)$. It follows that $M, g[\overline{y}'/\overline{a}'] \models \alpha$ for each atomic formula $\alpha$ occurring in $\varphi(x,\overline{y})$, where $\overline{y}'$ is the tuple of variables of $\overline{y}$ occurring in $\alpha$. By a simple subformula induction, we then obtain that $M \models \varphi(b,\overline{a})$, completing the proof.
\end{proof}

\begin{lemma}
\label{BINDexpLemma}
Let $L$ be any \FO-fragment which can express atomic facts and is closed under guarded quantification, conjunction, and unary implication. If $L$ can express $\varphi \in \FOxc$ and all of its subformulas, then $L$ can express $\bind_{\overline{y} \mapsto \overline{P}}(\varphi)$.
\end{lemma}

\begin{proof} \hspace*{1pt} \\
We show by strong induction on the complexity of the \FOxc-formula $\varphi$ that this proposition holds.

\hspace*{1pt} \\
\textbf{Base Case} \\
If $\varphi$ is an atomic fact and $\overline{y}=y_1 \ldots, y_n$, then \[\bind_{\overline{y} \mapsto \overline{P}}(\varphi) = \exists \overline{y} (\varphi \land \bigwedge_{1\leq i\leq n, y_i\in \free(\alpha)} P_i(y_i)),\] 
which $L$ can express by closure under conjunction and guarded quantification.

\hspace*{1pt} \\
\textbf{Inductive Step} \\
Suppose that $\varphi = \psi_1 \land \psi_2$. Since $L$ can express $\varphi$ and all of its subformulas, it can also express $\psi_1$, $\psi_2$, and all of their subformulas. Then by the inductive hypothesis, $L$ can express $\bind_{\overline{y} \mapsto \overline{P}}(\psi_1)$ and $\bind_{\overline{y} \mapsto \overline{P}}(\psi_2)$. Then by closure under conjunctions, $L$ can express $\bind_{\overline{y} \mapsto \overline{P}}(\varphi) = \bind_{\overline{y} \mapsto \overline{P}}(\psi_1) \land \bind_{\overline{y} \mapsto \overline{P}}(\psi_2)$.

\hspace*{1pt} \\
Next, suppose that $\varphi(\overline{x},\overline{y}) = \exists z \psi(\overline{x},\overline{y},z)$. We need to show that $L$ can express $\bind_{\overline{y} \mapsto \overline{P}}(\varphi(\overline{x},\overline{y}))$, which, by definition, is the same as $\exists z (\bind_{\overline{y} \mapsto \overline{P}}(\psi(\overline{x},\overline{y},z)))$.

Since $L$ can express $\varphi$ and all of its subformulas, it can also express $\psi$ and all of its subformulas. Then, by the inductive hypothesis, 
$L$ can express $\bind_{\overline{y} \mapsto \overline{P}}(\psi)$ as well as
$\bind_{\overline{x}\overline{y} \mapsto \overline{Q}\overline{P}}(\psi)$. By closure under conjunction and guarded quantification, it follows that $L$ can express
\[\gamma(\overline{x}) :=\exists z (G(\overline{x},z) \land \bind_{\overline{y} \mapsto \overline{P}}(\psi))\] 
and
\[\exists z (z=z \land \bind_{\overline{x}\overline{y} \mapsto \overline{Q}\overline{P}}(\psi)),\]
where $G$ is a fresh relation symbol not occurring in $\psi$. Then by closure under unary implications, we have that $L$ can also express
$$\chi(\overline{x}) := \big(\bigwedge_{i} Q_i(x_i)\big) \to \exists z (z=z \land \bind_{\overline{x}\overline{y} \mapsto \overline{Q}\overline{P}}(\psi)).$$

\medskip\par\noindent\textbf{Claim: }
$\gamma(\overline{x})\models \chi(\overline{x})$

\medskip\par\noindent\emph{Proof of claim:}
By Proposition \ref{BINDcomp}, 
\begin{equation}
\bind_{\overline{x}\overline{y} \mapsto \overline{Q}\overline{P}}(\psi) \equiv \bind_{\overline{x} \mapsto \overline{Q}}(\bind_{\overline{y} \mapsto \overline{P}}(\psi))
\label{eq:bind-bind}
\end{equation} 
Therefore, by Proposition \ref{impLemma},
$$\bind_{\overline{y} \mapsto \overline{P}}(\psi) \models \big(\bigwedge_{i} Q_i(x_i)\big) \to  \bind_{\overline{x}\overline{y} \mapsto \overline{Q}\overline{P}}(\psi),$$
From this, it  follows that
$$\exists z(\bind_{\overline{y} \mapsto \overline{P}}(\psi)) \models \big(\bigwedge_{i} Q_i(x_i)\big) \to \exists z\bind_{\overline{x}\overline{y} \mapsto \overline{Q}\overline{P}}(\psi),$$
(because $z$ is distinct from $x_i$)
and therefore
$\gamma(\overline{x})\models \chi(\overline{x})$. This concludes the proof of the claim.

\medskip
Since $L$ can express both $\gamma(\overline{x})$ and $\chi(\overline{x})$, we have by the Craig interpolation property that $L$ can express some Craig interpolant $\vartheta(\overline{x})$. Since $G$ and the $Q_i$ predicates do not occur in $\varphi$, they do not occur in $\vartheta(\overline{x})$, and therefore, the following
second-order implication is valid:
$$\exists G \gamma(\overline{x}) \models \vartheta(\overline{x}) \models \forall P \chi(\overline{x}).$$

It is easy to see that $\exists G \gamma(\overline{x}) \equiv \exists z \bind_{\overline{y} \mapsto \overline{P}}(\psi)$. Similarly, it follows from Proposition \ref{eqLemma} and equation (\ref{eq:bind-bind}) that $\forall \overline{P} \chi(\overline{x}) \equiv \exists z \bind_{\overline{y} \mapsto \overline{P}}(\psi)))$. Hence 
$$\exists z \bind_{\overline{y} \mapsto \overline{P}}(\psi) \models \vartheta(\overline{x}) \models \exists z \bind_{\overline{y} \mapsto \overline{P}}(\psi)$$
Therefore, $\vartheta(\overline{x}) \equiv \exists z \bind_{\overline{y} \mapsto \overline{P}}(\psi)$. In particular, this means that $\exists z \bind_{\overline{y} \mapsto \overline{P}}(\psi)$ is expressible in $L$.
\end{proof}


We are now ready to prove Proposition~\ref{prop:CQs}, restated below.

\propCQs*

\begin{proof} \hspace*{1pt} \\
By strong induction on formulas $\varphi$ of \FOxc. The base case is immediate, since $L$ can express all atomic formulas. For the inductive step, if $\varphi := \psi_1 \land \psi_2$, then by the inductive hypothesis, $L$ can express $\psi_1$ and $\psi_2$, and so by closure under conjunction, $L$ can express $\varphi$. Now suppose $\varphi(\overline{y}) := \exists x(\psi(x,\overline{y}))$.
By the inductive hypothesis, together with closure
under guarded quantification, $L$ can express
\[\gamma(\overline{y}) := \exists x (G(x, \overline{y}) \land \psi).\]
Furthermore, 
by Lemma \ref{BINDexpLemma}, we have that $L$ can express $\bind_{\overline{y} \mapsto \overline{P}}(\psi)$, and therefore, by closure under guarded quantification and unary implications, $L$ can express \[\chi(\overline{y}) := \big(\bigwedge_{i} P_i(y_i)\big) \to \exists x(x=x\land \bind_{\overline{y} \mapsto \overline{P}}(\psi)).\] 

\medskip\par\noindent\textbf{Claim: } 
$\gamma(\overline{y}) \models \chi(\overline{y})$.

\medskip\emph{Proof of claim:}
It is clear that $\gamma(\overline{y})\models \exists x \psi$.
Furthermore, by Proposition \ref{impLemma},
$\psi\models\big(\bigwedge_{i} P_i(y_i)\big) \mapsto \bind_{\overline{y} \mapsto \overline{P}}(\psi)$, from which it follows that
$\exists x \psi \models\chi(\overline{y})$
(since the variable $x$ is distinct from $y_1, \ldots, y_n$).
Therefore, $\gamma(\overline{y}) \models \chi(\overline{y})$.
\medskip

Let $\vartheta(\overline{y})$ be any interpolant for $\gamma(\overline{y}) \models \chi(\overline{y})$ in $L$.
Since $G$ and the predicates in $\overline{P}$ do not occur in $\psi$, we then have that  the following second-order entailments are
valid:
$$\exists G \exists x (G(x, \overline{y}) \land \psi) \models \vartheta(\overline{y}) \models \forall \overline{P} ((\bigwedge_{i} P_i(y_i)) \to \exists x \bind_{\overline{y} \mapsto \overline{P}}(\psi)).$$
It is easy to see that
$$\exists G \exists x (G(x, \overline{y}) \land \psi) \equiv \exists x \psi.$$
Furthermore, by Lemma \ref{eqLemma},
$$\psi \equiv \forall \overline{P} ((\bigwedge_{i} P_i(y_i)) \to \bind_{\overline{y} \mapsto \overline{P}}(\psi)).$$
from which it follows that
$$\exists x \psi \equiv \forall \overline{P} ((\bigwedge_{i} P_i(y_i)) \to \exists x \bind_{\overline{y} \mapsto \overline{P}}(\psi))$$
(since $x$ is distinct from $y_1, \ldots, y_n$).

Therefore, $\vartheta(\overline{y}) \equiv \varphi(\overline{y})$, and so we are done.
\end{proof}

We are now ready to prove the main result.

\thmmain*

\begin{proof} \hspace*{1pt} \\
Since $L$ can express self-guarded \GFO-formulas, it can express formulas of the form $\exists \overline{x} \beta$, where $\beta$ is an atomic formula. Thus by closure under self-guarded substitution, we have that $L$ is closed under guarded quantification. Furthermore, $L$ can express any self-guarded formula of the form $\alpha \land \lnot \beta$, where $\alpha$ and $\beta$ are atomic formulas such that $free(\alpha) = free(\beta)$. Then for any formula $\varphi$ expressible in $L$ with $free(\varphi) \subseteq free(\beta)$, $\alpha \land \varphi$ is a self-guarded formula. Thus by self-guarded substitution, $L$ can also express $\alpha \land \lnot (\alpha \land \varphi)$, which is equivalent to $\alpha \land \lnot \varphi$; hence $L$ is closed under guarded negation. If $L$ can express $\varphi$, then by closure under guarded negation and disjunction, it can also express $(x=x \land \lnot P(x)) \lor \varphi$, which is equivalent to $P(x) \to \varphi$. Hence $L$ is closed under unary implications. Therefore, by Theorem \ref{prop:CQs}, $L$ can express all formulas in \FOxc. Then by expressibility of disjunction, $L$ can express all unions of conjunctive queries. The result then follows immediately from the UCQ-syntax for \GNFO, by closure under self-guarded substitution.
\end{proof}

\end{document}


\begin{flushleft}
\section*{Appendix: Expressibility of UCQs}

\newcommand{\free}{\textrm{free}}
\newcommand{\bind}{\textsf{BIND}}

\appendices

\section{Proof of Theorem~\ref{thm:FOtwo-main}}
\label{app:FOtwo}

Several of the following proofs make use of second-order logic, with 
quantifiers over predicates. These second-order quantifiers will
be taken to range over the full powerset of the 
domain of the structure.

\fotwothm* 

\begin{proof}
The following proof uses a similar strategy as was used in~\cite{tencate2005:interpolation} to show that
every abstract modal language extending the modal language 
with the difference operator has full first-order
expressive power.

We will show by formula induction that, for every $\FO$-formula $\phi(x_1 \ldots, x_n)$ there is a sentence $\psi\in L$ over an extended signature containing additional
unary predicates $P_1, \ldots, P_n$, that is equivalent to 
$$\exists x_1\ldots x_n(\big(\!\!\!\bigwedge_{i=1\ldots n}\!\!\! P_i(x_i)\land\forall y(P_i(y)\to y=x_i)\big)\land\phi(x_1, \ldots, x_n)).$$ 

In other words, $\psi$ is a sentence expressing that $\phi$ holds under an assignment of its free variables to some tuple of elements which uniquely satisfy the $P_i$ predicates. In the case that $n=0$ (i.e., the case that $\phi$ is a sentence), we then have that $\psi$ is equivalent to $\phi$, which show that $FO\preceq_{sent} L$.

The base case of the induction is straightforward (recall that we restrict attention to relations of arity at most 2). The induction step for the Boolean connectives is straightforward as well (using closure under substitution). In fact, 
the only non-trivial part of the argument is the induction step for the existential quantifier. 
Let $\phi(x_1,\ldots,x_n)$ be of the form $\exists x_{n+1} \phi'(x_1.\ldots,x_n,x_{n+1})$. 
By induction, there is an $L$-sentence $\psi$
over the signature with additional unary predicates $P_1, \ldots, P_{n+1}$, corresponding to $\phi'(x_1, \ldots, x_n, x_{n+1})$.
Now, let $\psi'$ be obtained from $\psi$ by replacing every occurrence of $P_{n+1}$ by $P'$ for some fresh unary predicate $P'$. Furthermore, let
%
\[ \begin{array}{ll} 
\gamma(x) &:= \psi \land P_{n+1}(x),\\[2mm]
\chi(x)   &:= (P'(x)\land\forall y(P'(y)\to y=x)) \to \psi'.
\end{array} \]
(where $x$ is either of the two variables we have at our disposal; it does not matter which). It follows from closure under substitution that both can be written as an $L$-formula.
Then $$\gamma(x) \models \chi(x).$$
Let $\theta(x)\in L$ be an interpolant. 
By closure under substitution, $\exists x \theta(x)$ is expressible in $L$ as well.
We claim that this sentence satisfies the requirement of our 
claim.

To see this, first observe that since $P_{n+1}$  occurs only in the antecedent and $P'$ only in the consequent, the following second-order entailment is also valid:
$$\exists P_{n+1}\gamma(x) \models \vartheta(x) \models \forall P' \chi(x).$$

It is not hard to see that $\exists P_{n+1} \gamma(x)$ and $\forall P' \chi(x)$ are equivalent. Indeed, both are
satisfied in a structure $M$ under an assignment $g$ precisely
if $M',g\models\phi$, where $M'$ is the expansion of $M$ in 
which $P_{n+1}$ denotes the singleton set $\{g(x_{n+1})\}$.

It then follows that $\vartheta(x)$, being sandwiched between the two, is also equivalent to $\exists P_{n+1} \gamma(x)$. Therefore, $\exists x \vartheta(x)$ is 
equivalent to $\exists x \exists P_{n+1} \gamma(x)$, which 
is equivalent to $\exists P_{n+1} \psi$,
which clearly satisfies the requirement of our claim.
%
%
\end{proof}

This further implies undecidability, under a mild extra condition: 
we say that a fragment $L$ of $\FO$ is ``effectively closed under conjunction'', if there is a computable function
that takes any two formulas $\phi,\psi\in L$ and 
outputs a formula $\chi\in L$ such that $\chi$ is 
equivalent to $\phi\land\psi$.

\begin{corollary}
Let $L$ be any \FO-fragment that extends \FOtwo,
is closed under substitution, and has CIP. 
Furthermore, assume that $L$ is effectively 
closed under conjunction. Then the satisfiability
problem for $L$ is undecidable. 
\end{corollary}

\begin{proof}
It is known that satisfiability is undecidable
for $FO^2$-formulas with two transitive relations~\cite{Kieronski2005Results}.
This problem reduces to the satisfiability
problem for $L$ as follows: let $\phi$
be any $FO^2$-formula containing (among possibly
other relation symbols) binary relation symbols
$R_1$ and $R_2$. Then $\phi$ is satisfiable over
structures in which $R_1$ and $R_2$ are 
transitive, if and only if 
$\phi\land\psi$ is satisfiable, where $\psi$
is a (fixed) $L$-sentence expressing that $R_1$ 
and $R_2$ are transitive. Note that it follows from Theorem~\ref{thm:FOtwo-main} that such a 
sentence $\psi$ exists. Since $L$ is effectively
closed under conjunction, this is an effective 
reduction.
\end{proof}

\section{Proof of Theorem~\ref{prop:CQs}}
\label{app:GNF}

We will assume familiarity with conjunctive queries (CQs)
and unions of conjunctive queries (UCQs).
An important alternative characterization for \GNFO is that it is the logic which can express every union of conjunctive queries (UCQ) and is closed under guarded negation~\cite{Barany2015:guarded}. This is made explicit in the following equivalent grammar for \GNFO:
$$\varphi := R(\overline{x}) \mid x = y \mid \alpha \land \lnot \varphi \mid q[\varphi_1 / R_1, \hdots, \varphi_n / R_n],$$
where $q$ is a UCQ with relation symbols $R_1, \hdots, R_n$ and $\varphi_1, \hdots, \varphi_n$ are self-guarded formulas with the appropriate number of free variables and generated by the same recursive grammar. We refer to this as the UCQ syntax for \GNFO. 

The main thrust of the argument will be to show that our abstract logic $L$ can express all positive existential formulas, from which it will follow easily that $L$ is able to express all formulas in the UCQ syntax for \GNFO.

\begin{definition}
We write \FOxc for the fragment of first-order logic with only existential quantification and conjunction:
$$\varphi := R(x_1, \hdots, x_k) \mid x = y \mid \varphi \land \varphi \mid \exists x \varphi.$$
\end{definition}

\begin{definition}
Let $\varphi$ be a formula in \FOxc, let $\overline{y} = y_1, \hdots, y_n$ be a tuple of distinct variables,
and let $\overline{P}=P_1, \hdots, P_n$ be a tuple of unary predicates  of the same length. Then $\bind_{\overline{y}\mapsto \overline{P}}(\varphi)$ is defined recursively as follows:
\[\begin{array}{ll}
\bind_{\overline{y}\mapsto \overline{P}}(\alpha) &= 
\exists \overline{y}' (\alpha \land \bigwedge_{1\leq i\leq n, y_i\in \free(\alpha)}P_i(u_i)) \\
\bind_{\overline{y}\mapsto \overline{P}}(\phi \land \psi) &= \bind_{\overline{y}\mapsto \overline{P}}(\phi) \land \bind_{\overline{y}\mapsto \overline{P}}(\psi) \\
\bind_{\overline{y}\mapsto \overline{P}}(\exists z \psi) &= \exists z (\bind_{\overline{y}\mapsto \overline{P}}(\psi)),
\end{array}\]
where $\alpha$ is an atomic fact (possibly an equality),
and $\overline{y}'$ is the  restriction
of $\overline{y}$ to variables occurring in $\alpha$.
If no variable in $\overline{y}$ occurs in $\alpha$, 
$\bind_{\overline{y}\mapsto \overline{P}}(\alpha)$ is
understood to be
simply $\alpha$.
\end{definition}

\begin{remark}
The free variables of $\bind_{\overline{y} \mapsto \overline{P}}(\varphi)$, for $\overline{y}=y_1,\hdots,y_n$, are exactly $\free(\varphi)\setminus\{y_1, \hdots, y_n\}$. This justifies our use of the word ``BIND''.
\end{remark}


\begin{proposition}
\label{BINDcomp}
For all $\FOxc$-formulas $\varphi$ and for all 
$\overline{x}, \overline{y}$ and $\overline{P},\overline{Q}$, if $\overline{x}$ and $\overline{y}$
are disjoint, then
$$\bind_{\overline{x}\overline{y} \mapsto \overline{P}\overline{Q}}(\varphi) \equiv \bind_{\overline{x} \mapsto \overline{P}}(\bind_{\overline{y} \mapsto \overline{Q}}(\varphi)).$$
\end{proposition}

\begin{definition}
We call a formula $\varphi$ \textit{clean} if no free variable of $\varphi$ also occurs bound in $\varphi$, and $\varphi$ does not contain two quantifiers for the same
variable.
\end{definition}

\begin{proposition}
\label{impLemma}
For every clean \FOxc-formula $\varphi$, for every tuple of distinct variables $\overline{y}=y_1, \hdots, y_n$ (with each $y_i\in \free(\varphi)$), and for every tuple of unary predicates
$\overline{P}=P_1,\ldots,P_n$, we have that
$$\big(\bigwedge_{i=1\ldots n} P_i(y_i)\big)\models \varphi  \to \bind_{\overline{y} \mapsto \overline{P}}(\varphi).$$
\end{proposition}

\begin{proof} The proof is by induction on $\varphi$. More precisely,
the induction hypothesis states that, for every
model $M$ and variable assignment $g$, if 
$M,g\models \bigwedge_{i=1\ldots n} P_i(y_i)$
and 
 $M,g\models\varphi$ then $M,g\models \bind_{\overline{y} \mapsto \overline{P}}(\varphi)$.
\end{proof}

\begin{proposition}
\label{eqLemma}
For every clean \FOxc-formula $\varphi(x,\overline{y})$ with $\overline{y}=y_1, \ldots, y_n$ distinct from $x$, and for every $n$-tuple of unary predicates
$\overline{P}=P_1, \ldots, P_n$ not occurring in $\varphi$, we have that
$$\exists x\varphi(x,\overline{y}) \equiv \forall \overline{P}\Big(\big(\bigwedge_{i=1\ldots n} P_i(y_i)\big) \to \exists x \bind_{\overline{y} \mapsto \overline{P}}(\varphi(x,\overline{y}))\Big).$$
\end{proposition}

\begin{proof} \hspace*{1pt} \\
The left-to-right entailment follows from Proposition \ref{impLemma}: suppose $M,g\models \exists x\varphi(x,\overline{y})\land\bigwedge_{i=1\ldots n} P_i(y_i)$. Then $M,g[x/b]\models \varphi(x,\overline{y})\land \bigwedge_{i=1\ldots n} P_i(y_i)$ for some $b\in M$. 
Then, by Proposition~\ref{impLemma}, 
$M,g[x/b]\models \bind_{\overline{y}\mapsto\overline{P}}(\varphi(x,\overline{y}))$, and hence
$M,g\models \exists x\bind_{\overline{y}\mapsto\overline{P}}(\varphi(x,\overline{y}))$.

For the reverse direction, suppose $M,g \models \forall \overline{P}(\bigwedge_i P_i(y_i) \to \exists x \bind_{\overline{y} \mapsto \overline{P}}(\varphi(x,\overline{y})))$. Let $M'$ be the expansion
of the structure $M$ in which each unary predicate symbol $P_i$ is interpreted as $\{g(y_i)\}$. Then, 
by the semantics of second-order quantifiers, 
we have that $M',g\models \exists x \bind_{\overline{y} \mapsto \overline{P}}(\varphi(x,\overline{y}))$, and hence $M',g[x/b] \models \bind_{\overline{y} \mapsto \overline{P}}(\varphi(x,\overline{y}))$ for some $b \in M$. 
To complete the proof, it suffices to show that
$M',g[x/b]\models \varphi(x,\overline{y})$ (since 
this implies that also $M,g[x/b]\models\varphi(x,\overline{y})$).

For any subformula containing a bound occurrence of a variable $y_i \in \overline{y}$, we have that any witness for that variable $y_i$ must also be in $P_i$ (by construction of $\bind_{\overline{y} \mapsto \overline{P}}(\varphi(x,\overline{y}))$ and the assumption that $\varphi(x,\overline{y})$ is clean). Since each $P_i$ is a singleton, this implies that each witness for $y_i$ in any subformula is $g(y_i)$. It follows that $M, g[\overline{y}'/\overline{a}'] \models \alpha$ for each atomic formula $\alpha$ occurring in $\varphi(x,\overline{y})$, where $\overline{y}'$ is the tuple of variables of $\overline{y}$ occurring in $\alpha$. By a simple subformula induction, we then obtain that $M \models \varphi(b,\overline{a})$, completing the proof.
\end{proof}

\begin{lemma}
\label{BINDexpLemma}
Let $L$ be any \FO-fragment which can express atomic facts and is closed under guarded quantification, conjunction, and unary implication. If $L$ can express $\varphi \in \FOxc$ and all of its subformulas, then $L$ can express $\bind_{\overline{y} \mapsto \overline{P}}(\varphi)$.
\end{lemma}

\begin{proof} \hspace*{1pt} \\
We show by strong induction on the complexity of the \FOxc-formula $\varphi$ that this proposition holds.

\hspace*{1pt} \\
\textbf{Base Case} \\
If $\varphi$ is an atomic fact and $\overline{y}=y_1 \ldots, y_n$, then \[\bind_{\overline{y} \mapsto \overline{P}}(\varphi) = \exists \overline{y} (\varphi \land \bigwedge_{1\leq i\leq n, y_i\in \free(\alpha)} P_i(y_i)),\] 
which $L$ can express by closure under conjunction and guarded quantification.

\hspace*{1pt} \\
\textbf{Inductive Step} \\
Suppose that $\varphi = \psi_1 \land \psi_2$. Since $L$ can express $\varphi$ and all of its subformulas, it can also express $\psi_1$, $\psi_2$, and all of their subformulas. Then by the inductive hypothesis, $L$ can express $\bind_{\overline{y} \mapsto \overline{P}}(\psi_1)$ and $\bind_{\overline{y} \mapsto \overline{P}}(\psi_2)$. Then by closure under conjunctions, $L$ can express $\bind_{\overline{y} \mapsto \overline{P}}(\varphi) = \bind_{\overline{y} \mapsto \overline{P}}(\psi_1) \land \bind_{\overline{y} \mapsto \overline{P}}(\psi_2)$.

\hspace*{1pt} \\
Next, suppose that $\varphi(\overline{x},\overline{y}) = \exists z \psi(\overline{x},\overline{y},z)$. We need to show that $L$ can express $\bind_{\overline{y} \mapsto \overline{P}}(\varphi(\overline{x},\overline{y}))$, which, by definition, is the same as $\exists z (\bind_{\overline{y} \mapsto \overline{P}}(\psi(\overline{x},\overline{y},z)))$.

Since $L$ can express $\varphi$ and all of its subformulas, it can also express $\psi$ and all of its subformulas. Then, by the inductive hypothesis, 
$L$ can express $\bind_{\overline{y} \mapsto \overline{P}}(\psi)$ as well as
$\bind_{\overline{x}\overline{y} \mapsto \overline{Q}\overline{P}}(\psi)$. By closure under conjunction and guarded quantification, it follows that $L$ can express
\[\gamma(\overline{x}) :=\exists z (G(\overline{x},z) \land \bind_{\overline{y} \mapsto \overline{P}}(\psi))\] 
and
\[\exists z (z=z \land \bind_{\overline{x}\overline{y} \mapsto \overline{Q}\overline{P}}(\psi)),\]
where $G$ is a fresh relation symbol not occurring in $\psi$. Then by closure under unary implications, we have that $L$ can also express
$$\chi(\overline{x}) := \big(\bigwedge_{i} Q_i(x_i)\big) \to \exists z (z=z \land \bind_{\overline{x}\overline{y} \mapsto \overline{Q}\overline{P}}(\psi)).$$

\medskip\par\noindent\textbf{Claim: }
$\gamma(\overline{x})\models \chi(\overline{x})$

\medskip\par\noindent\emph{Proof of claim:}
By Proposition \ref{BINDcomp}, 
\begin{equation}
\bind_{\overline{x}\overline{y} \mapsto \overline{Q}\overline{P}}(\psi) \equiv \bind_{\overline{x} \mapsto \overline{Q}}(\bind_{\overline{y} \mapsto \overline{P}}(\psi))
\label{eq:bind-bind}
\end{equation} 
Therefore, by Proposition \ref{impLemma},
$$\bind_{\overline{y} \mapsto \overline{P}}(\psi) \models \big(\bigwedge_{i} Q_i(x_i)\big) \to  \bind_{\overline{x}\overline{y} \mapsto \overline{Q}\overline{P}}(\psi),$$
From this, it  follows that
$$\exists z(\bind_{\overline{y} \mapsto \overline{P}}(\psi)) \models \big(\bigwedge_{i} Q_i(x_i)\big) \to \exists z\bind_{\overline{x}\overline{y} \mapsto \overline{Q}\overline{P}}(\psi),$$
(because $z$ is distinct from $x_i$)
and therefore
$\gamma(\overline{x})\models \chi(\overline{x})$. This concludes the proof of the claim.

\medskip
Since $L$ can express both $\gamma(\overline{x})$ and $\chi(\overline{x})$, we have by the Craig interpolation property that $L$ can express some Craig interpolant $\vartheta(\overline{x})$. Since $G$ and the $Q_i$ predicates do not occur in $\varphi$, they do not occur in $\vartheta(\overline{x})$, and therefore, the following
second-order implication is valid:
$$\exists G \gamma(\overline{x}) \models \vartheta(\overline{x}) \models \forall P \chi(\overline{x}).$$

It is easy to see that $\exists G \gamma(\overline{x}) \equiv \exists z \bind_{\overline{y} \mapsto \overline{P}}(\psi)$. Similarly, it follows from Proposition \ref{eqLemma} and equation (\ref{eq:bind-bind}) that $\forall \overline{P} \chi(\overline{x}) \equiv \exists z \bind_{\overline{y} \mapsto \overline{P}}(\psi)))$. Hence 
$$\exists z \bind_{\overline{y} \mapsto \overline{P}}(\psi) \models \vartheta(\overline{x}) \models \exists z \bind_{\overline{y} \mapsto \overline{P}}(\psi)$$
Therefore, $\vartheta(\overline{x}) \equiv \exists z \bind_{\overline{y} \mapsto \overline{P}}(\psi)$. In particular, this means that $\exists z \bind_{\overline{y} \mapsto \overline{P}}(\psi)$ is expressible in $L$.
\end{proof}


We are now ready to prove Proposition~\ref{prop:CQs}, restated below.

\propCQs*

\begin{proof} \hspace*{1pt} \\
By strong induction on formulas $\varphi$ of \FOxc. The base case is immediate, since $L$ can express all atomic formulas. For the inductive step, if $\varphi := \psi_1 \land \psi_2$, then by the inductive hypothesis, $L$ can express $\psi_1$ and $\psi_2$, and so by closure under conjunction, $L$ can express $\varphi$. Now suppose $\varphi(\overline{y}) := \exists x(\psi(x,\overline{y}))$.
By the inductive hypothesis, together with closure
under guarded quantification, $L$ can express
\[\gamma(\overline{y}) := \exists x (G(x, \overline{y}) \land \psi).\]
Furthermore, 
by Lemma \ref{BINDexpLemma}, we have that $L$ can express $\bind_{\overline{y} \mapsto \overline{P}}(\psi)$, and therefore, by closure under guarded quantification and unary implications, $L$ can express \[\chi(\overline{y}) := \big(\bigwedge_{i} P_i(y_i)\big) \to \exists x(x=x\land \bind_{\overline{y} \mapsto \overline{P}}(\psi)).\] 

\medskip\par\noindent\textbf{Claim: } 
$\gamma(\overline{y}) \models \chi(\overline{y})$.

\medskip\emph{Proof of claim:}
It is clear that $\gamma(\overline{y})\models \exists x \psi$.
Furthermore, by Proposition \ref{impLemma},
$\psi\models\big(\bigwedge_{i} P_i(y_i)\big) \mapsto \bind_{\overline{y} \mapsto \overline{P}}(\psi)$, from which it follows that
$\exists x \psi \models\chi(\overline{y})$
(since the variable $x$ is distinct from $y_1, \ldots, y_n$).
Therefore, $\gamma(\overline{y}) \models \chi(\overline{y})$.
\medskip

Let $\vartheta(\overline{y})$ be any interpolant for $\gamma(\overline{y}) \models \chi(\overline{y})$ in $L$.
Since $G$ and the predicates in $\overline{P}$ do not occur in $\psi$, we then have that  the following second-order entailments are
valid:
$$\exists G \exists x (G(x, \overline{y}) \land \psi) \models \vartheta(\overline{y}) \models \forall \overline{P} ((\bigwedge_{i} P_i(y_i)) \to \exists x \bind_{\overline{y} \mapsto \overline{P}}(\psi)).$$
It is easy to see that
$$\exists G \exists x (G(x, \overline{y}) \land \psi) \equiv \exists x \psi.$$
Furthermore, by Lemma \ref{eqLemma},
$$\psi \equiv \forall \overline{P} ((\bigwedge_{i} P_i(y_i)) \to \bind_{\overline{y} \mapsto \overline{P}}(\psi)).$$
from which it follows that
$$\exists x \psi \equiv \forall \overline{P} ((\bigwedge_{i} P_i(y_i)) \to \exists x \bind_{\overline{y} \mapsto \overline{P}}(\psi))$$
(since $x$ is distinct from $y_1, \ldots, y_n$).

Therefore, $\vartheta(\overline{y}) \equiv \varphi(\overline{y})$, and so we are done.
\end{proof}

We are now ready to prove the main result.

\thmmain*

\begin{proof} \hspace*{1pt} \\
Since $L$ can express self-guarded \GFO-formulas, it can express formulas of the form $\exists \overline{x} \beta$, where $\beta$ is an atomic formula. Thus by closure under self-guarded substitution, we have that $L$ is closed under guarded quantification. Furthermore, $L$ can express any self-guarded formula of the form $\alpha \land \lnot \beta$, where $\alpha$ and $\beta$ are atomic formulas such that $free(\alpha) = free(\beta)$. Then for any formula $\varphi$ expressible in $L$ with $free(\varphi) \subseteq free(\beta)$, $\alpha \land \varphi$ is a self-guarded formula. Thus by self-guarded substitution, $L$ can also express $\alpha \land \lnot (\alpha \land \varphi)$, which is equivalent to $\alpha \land \lnot \varphi$; hence $L$ is closed under guarded negation. If $L$ can express $\varphi$, then by closure under guarded negation and disjunction, it can also express $(x=x \land \lnot P(x)) \lor \varphi$, which is equivalent to $P(x) \to \varphi$. Hence $L$ is closed under unary implications. Therefore, by Theorem \ref{prop:CQs}, $L$ can express all formulas in \FOxc. Then by expressibility of disjunction, $L$ can express all unions of conjunctive queries. The result then follows immediately from the UCQ-syntax for \GNFO, by closure under self-guarded substitution.
\end{proof}

\end{flushleft}